\newcommand{\ubar}[1]{\underaccent{\bar}{#1}}
\DeclareDocumentCommand{\publicBelief}{O{\mu}}{#1}
\DeclareDocumentCommand{\privateBelief}{O{p}}{#1}
\DeclareDocumentCommand{\signalupdate}{O{q}}{#1}
\DeclareDocumentCommand{\signal}{O{s}}{#1}
\DeclareDocumentCommand{\signalsSet}{O{S}}{#1}
\DeclareDocumentCommand{\state}{O{\omega}}{#1}
\DeclareDocumentCommand{\statesSet}{O{\Omega}}{#1}
\DeclareDocumentCommand{\price}{O{\tau}}{#1}
\DeclareDocumentCommand{\action}{O{a}}{#1}
\DeclareDocumentCommand{\limitParam}{O{\alpha}}{#1}
\DeclareDocumentCommand{\deterrencePrice}{O{\price} O{d}}{#1^#2}
\DeclareDocumentCommand{\LLR}{O{x}}{\log(\frac{#1}{1-#1})}
\DeclareDocumentCommand{\llr}{O{x}}{l\left(#1\right)}
\DeclareDocumentCommand{\lBound}{O{\limitParam} O{\publicBelief}}{\ubar{#1}_{#2}}
\DeclareDocumentCommand{\uBound}{O{\limitParam} O{\publicBelief}}{\bar{#1}_{#2}}
\DeclareDocumentCommand{\eqPrice}{O{\price}}{#1^{*}}
\newtheorem{theorem}{Theorem}
\theoremstyle{definition}
\newlist{secenum}{enumerate}{10}
\setlist[secenum]{label=\thesection.\arabic*,leftmargin=*}
\providecommand{\customgenericname}{}
\newcommand{\newcustomtheorem}[2]{%
	\newenvironment{#1}[1]
	{%
		\renewcommand\customgenericname{#2}%
		\renewcommand\theinnercustomgeneric{##1}%
		\innercustomgeneric
	}
	{\endinnercustomgeneric}
}
\newcommand{\blocktheorem}[1]{%
	\csletcs{old#1}{#1}
	\csletcs{endold#1}{end#1}
	\RenewDocumentEnvironment{#1}{o}
	{\par\addvspace{1.5ex}
		\noindent\begin{minipage}{\textwidth}
			\IfNoValueTF{##1}
			{\csuse{old#1}}
			{\csuse{old#1}[##1]}}
		{\csuse{endold#1}
		\end{minipage}
		\par\addvspace{1.5ex}}
}
\definecolor{ao}{rgb}{0.0, 0.5, 0.0}
\begin{document}

\title{A Practical Approach to Social Learning.}

\author[1]{Amir Ban\thanks{amirban@me.com and korenm@stanford.edu }}
\author[2]{Moran Koren\thanks{Research supported by the Fulbright Postdoctoral Fellowship 2019/2020}}
\affil[1]{Weizmann Institute of Science}
\affil[2]{Stanford University}

\renewcommand\Authands{, and }
\renewcommand\footnotemark{}

\date{\today}
\maketitle
\begin{abstract}
 Models of social learning feature either binary signals or abstract signal structures often deprived of micro-foundations.  Both models are limited when analyzing interim results or performing empirical analysis.  We present a method of generating signal structures which are richer than the binary model, yet are tractable enough to perform simulations and empirical analysis.  We demonstrate the method's usability by revisiting two classical papers: (1) we discuss the economic significance of unbounded signals \cite{Smith2012}; (2) we use experimental data from  \cite{Anderson1997} to perform econometric analysis. Additionally, we provide a necessary and sufficient condition for the occurrence of action cascades. 
\end{abstract}
\newpage

The literature of social learning focuses on the question ``Why do people often emulate the actions of predecessors, even when those actions contradict their own private information?''. The canonical model of Banerjee \cite{Banerjee1992}, and of Bikhchandani, Hirshleifer and Welch \cite{bikhchandani1992theory}, provided a simple and intuitive answer: ``when the history is sufficiently skewed against her private information, a rational agent's optimal action is to ignore it and follow the history.'' They do so by presenting a model in which agents, who receive a noisy private signal over an underlying state of nature, are called to act sequentially. Agent signals are binary, and their quality is commonly known. Each agent sees the actions of previous arrivals, updates her belief over the space set, using the information revealed by this action history, and chooses the action which maximizes her expected utility. Due to this elegant, yet simple, model, they show that in any game trajectory, at some point, agents will follow in the footsteps of those who precede them, even if, {\em a priori}, their signal favors the other action. These models generated great interest from both experimental economists, trying to construct information cascades in a lab (see \cite{Anderson1997}), as well as from econometricians, trying to estimate the effect of social learning on field data (see \cite{Zhang2010}). The main challenge facing those practitioners originated from the binary signal structure. In those models, a cascade emerges whenever the number of agents who took one action exceeded those who took the other by two. To circumvent this theoretical limitation, researchers resorted to attempting to estimate ``trace evidence'' for the occurrence of cascades (see \cite{Zhang2010}), assuming agents make mistakes (\cite{Anderson1997}), or assuming that agent valuation for both actions fluctuates (see \cite{Goeree2007}). All these methods violate the simple intuition mentioned above.

A second wave of research on social learning originated from Smith and S\o rensen \cite{Smith2012}. In those models, the signal structure is assumed to be abstract. Their focus is often on the asymptotic efficiency of the public belief convergence process. If the public belief converges to an interior point (as in the binary models), then \textit{an information cascade} occurs. If the convergence is to either zero or one, then \textit{learning occurs}.  Smith and S\o rensen \cite{Smith2012} contributed two major results to the discussion: (1) For learning to occur, one requires that for every history, there will always be a positive chance that the agent will choose a contrary action (a condition they call \textit{unbounded signals}), otherwise an information cascade occurs in finite time. (2) They distinguished between an information cascade (i.e., convergence of the public belief to an interior point) and an action cascade (i.e., at some point the agent chooses an action with probability one). Herrera and H\o rner \cite{herrera2012necesssary} present a condition for the occurrence of information cascades, when the domain is compact and the information structure satisfy several technical conditions.   In Section \ref{sec: condition} we show that the condition does not hold in the general case and present  a counterexample. 

 Smith and S\o rensen \cite{Smith2012}'s results generated great interest among theoreticians attempting to challenge it in various information structures (see \cite{Acemoglu2010}), agent utility functions (see \cite{Eyster2014}), or market structures (see \cite{Arieli2019}). However, the vast majority of those results focus on asymptotic analysis, and often struggle to provide economic insight on agents' short-term behavior. 

In this work we attempt to bridge this gap by presenting a method to generate signal structures which are richer than the binary signals models, yet are more tractable than the abstract signal models. Therefore our method can be used both to perform empirical analysis while maintaining the core intuition of information cascades, and to construct examples which convey economic short term behavior, thus supplementing theoretical research.  As an additional theoretic contribution, we present a necessary and sufficient condition for the occurrence of action cascades in general signal structures.

The structure of the paper is as follows. In Section \ref{sec:model} we present our model. In Section \ref{sec:SnS} we revisit the results of \cite{Smith2012} and discuss its economic significance in the short term. Section \ref{sec: condition} contains our condition for action cascades. In Section \ref{sec:econometric}, we demonstrate the applicability of our method to econometric analysis by reverse engineering the experiment of \cite{Anderson1997}. In Section \ref{sec:discussion}, we conclude.

\section{The Model}\label{sec:model}

There are two possible states of nature, $\omega \in \Omega=\{0,1\}.$ The prior probability of the realized state being $1$ is denoted by $\Pr(\omega=1)=\mu_0.$ This prior probability is commonly known, and is often dubbed the initial public belief.  
There is a countable set of agents $N.$ The set of available actions for  agent $t$ is $A_t=A\equiv\{0,1\}$ for all $t\in N.$  Agent utilities are determined by the realized state in the following manner,
$$
u_t(a_t)=\begin{cases}
1&\mbox{ if }a_t=\omega\\
-1&\mbox{ if }a_t\ne \omega
\end{cases}.
$$
Agents arrive sequentially by a predetermined order. Without loss of generality, we denote each agent by her arrival time. That is, for all $t\in N,$ we assume that agent $t$ arrives at period $t.$   Each agent receives a private signal $s_t\in S,$ where $S$ is the set of possible signals and is identical for all agents. 

We follow Banerjee \cite{Banerjee1992} and Bikhchandani et al. \cite{bikhchandani1992theory} and assume that the signal set is binary, i.e. $S=\{0,1\}.$  Let $q_t=\Pr(s_t=\omega),$ where $\omega\in\{0,1\}$ is the realized state of nature. 
 In the canonical models mentioned above, a common assumption is that $q_t=q$ for every $t.$ That is, the quality of all agent signals is the same.    We diverge from this assumption and assume that the signal quality $q_t$ is also  private and is independently drawn from a known set $Q\subseteq [\frac{1}{2},1],$ according to some distribution  $F(\cdot).$ Note that $F(\cdot)$ is not state-dependent and is commonly known. We denote  by $f$ the corresponding density (if $Q$ is a continuous sample space or PMF otherwise). 

Agents observe all actions taken previously to their arrival. Let $H_t\subseteq [0,1]^{t-1}$ denote the set of possible action histories at time $t,$ where $H_1=\{\emptyset\}.$  Let $H=\cup_{t\ge 1} H_t$ be the set of all finite histories, and $H_\infty=H\cup \{0,1\}^\infty$ be the set of all infinite histories.

 A \textit{strategy} for agent $t$ is a measurable function  $\sigma_t:H_t\times S\times Q\rightarrow\Delta(A)$ which maps every history to a decision rule.  We denote a profile of agent strategies by $\bar{\sigma}=(\sigma_t)_{t\ge 1}.$ A strategy profile $\bar\sigma,$  together with the information structure $(S,Q,F),$ and the initial public belief $\mu_0$ induces a probability distribution $P_{\bar\sigma}$ over $\Omega\times H_\infty \times S^\infty\times Q^\infty.$  We define the \textit{public belief} at time $t$, $\mu_t=P_{\bar{\sigma}}(\omega=1|h_t)$ as the probability that the state is $1$,  conditional on the realized history $h_t$ before $t$'s action.  Agents update their beliefs using Bayes' rule, and  hence the expected utility of an agent for action $a=1$  can be written as,
\begin{eqnarray}
 u_t(a_t=1|s_t=0,q_t)&=\frac{\mu_t (1-q_t)}{\mu_t (1-q_t)+(1-\mu_t)q_t}-\frac{(1-\mu_t)q_t}{\mu_t (1-q_t)+(1-\mu_t)q_t}\\
 u_t(a_t=1|s_t=1,q_t)&=\frac{\mu_t q_t}{\mu_t q_t+(1-\mu_t)(1-q_t)}-\frac{(1-\mu_t)(1-q_t)}{\mu_t q_t+(1-\mu_t)(1-q_t)}
\end{eqnarray}
This agent with $s_t=0$ will play $a=1$ whenever $\mu_t>q_t.$ Similarly an agent with $s_t=1$ will play $a_t=1$ whenever $\mu_t>1-q_t.$

 As the value of $q_t$ is unknown to future arriving agents, calculating the updating rule requires some work. To do so we use the distribution of signal qualities $F$ to derive a distribution over possible agent posteriors. For every pair $s_t,q_t$ we denote by $x(s_t,q_t)=Pr(\omega=1|\mu=\frac{1}{2},q_t,s_t).$  Note that for all agents other than $t$, $x(s_t,q_t)$ is a random variable. We suppress the notation $s_t,q_t$ and let the agent {\em type} $x$ be a random variable describing agent $t$'s  posterior belief whenever $\mu_t=\frac{1}{2}$.

 Let $\bar q = \sup Q.$   Recalling the definition of quality $q_t=\Pr(s_t=\omega)$, $x$ is a random variable with support in $[1-\bar q,\bar q]$ with the following state-conditional densities $g_\omega(\cdot)$ for state $\omega\in\{0,1\},$
\begin{eqnarray}
&g_1(x)=\begin{cases}
\label{x1}
xf(x)&\mbox{ if }x\in[\frac{1}{2},\bar{q}]\\
xf(1-x)&\mbox{ if }x\in[1-\bar{q},\frac{1}{2}]
\end{cases}\\
&g_0(x)=\begin{cases}
\label{x0}
(1-x)f(x)&\mbox{ if }x\in[\frac{1}{2},\bar{q}]\\
(1-x)f(1-x)&\mbox{ if }x\in[1-\bar{q},\frac{1}{2}]
\end{cases}
\end{eqnarray}
and define $G_\omega(x)=\int_{0}^{x}g_\omega(z)dz,$ as the CDF of the state conditional distributions.

Note that for any quality distribution $F,$ the ratio $\frac{g_1(x)}{g_0(x)}$ equals $\frac{x}{1-x},$  thus increasing in $x.$ I.e., the type distribution exhibits the Monotone (increasing) Likelihood Ratio Property (MLRP).  By Bayes' rule, agent $t$'s private posterior $\mu := \Pr(\omega=1| h_t, x_t)$ satisfies $$\frac{\mu}{1 - \mu} = \frac{\mu_{t}}{1-\mu_{t}} \frac{g_1(x_t)}{g_0(x_t)} = \frac{\mu_{t}}{1-\mu_{t}} \frac{x_t}{1 - x_t}$$ and therefore the optimal strategy of agent $t$ is a threshold strategy. That is, for every $\mu_t,$ there exists $\tilde x(\mu_t)\in [1-\bar q,\bar q]$ such that whenever $x<\tilde x(\mu),$ $a_t=0$ and whenever $x > \tilde x(\mu)$, $a_t=1.$ We denote $\mu_t^+:=\Pr(\omega=1|h_t,a_t=1)$ and $\mu_t^-:=\Pr(\omega=1|h_t,a_t=0),$ and formulate the updating rules as follows
\begin{eqnarray}
&\frac{\mu^+_{t+1}}{1-\mu^+_{t+1}}=\frac{\mu_{t}}{1-\mu_{t}}\frac{1-G_1(\tilde x(\mu_t))}{1-G_0(\tilde x(\mu_t))}\label{eq:mu+}\\
&\frac{\mu^-_{t+1}}{1-\mu^-_{t+1}}=\frac{\mu_{t}}{1-\mu_{t}}\frac{G_1(\tilde x(\mu_t))}{G_0(\tilde x(\mu_t))} \label{eq:mu-}
\end{eqnarray}
Unlike the abstract signal structure of Smith and S\o rensen \cite{Smith2012}, for a given $F(\cdot),\mu_0,$ and $Q,$ equations \ref{eq:mu+} and \ref{eq:mu-} can be used to calculate the updated public belief following any finite length history $h_t=\{a_1,a_2,\dots,a_{t-1}\}.$  

In addition, Agent $t$, with type $x,$ will play $a_t=1$ whenever $$\frac{x}{1-x}>\frac{1-\mu_t}{\mu_t}$$ An up-cascade occurs whenever $\mu_t>\bar{q}$ and a down-cascade occurs whenever $\mu_t<1-\bar{q}.$ The cascade regions, unsurprisingly, are identical to those of the binary model  of Banerjee \cite{Banerjee1992} and Bikhchandani et. al. \cite{bikhchandani1992theory}.  The difference is in the time of convergence, i.e. in the number of consecutive actions required to induce a cascade. In the following examples we examine this aspect using several distribution families. 

\subsection{Example: Uniform distribution}\label{sec:UD}

To illustrate the uses of the method described above, we introduce a simple example.  Assume that $q_t\sim U[\frac{1}{2},\bar q]$ for every $t$. 

By equations \eqref{x1} and \eqref{x0} we can calculate the following distributions,

$$
G_1(x)=\int_{1-\bar q}^{x} r\frac{1}{\bar q -\frac{1}{2}}dr = \begin{cases}
0&\mbox{ if }x<1-\bar q\\
\frac{x^2-(1-\bar{q})^2}{2\bar q-1}&\mbox{ if }x\in[1-\bar q,\bar q]\\
1&\mbox{ if }x>\bar q
\end{cases}
$$
and
$$
G_0(x)=\int_{1-\bar q}^{x} (1-r)\frac{1}{\bar q -\frac{1}{2}}dr = \begin{cases}
0&\mbox{ if }x<1-\bar q\\
\frac{\bar{q}^2-(1-x)^2}{2\bar q-1}&\mbox{ if }x\in[1-\bar q,\bar q]\\
1&\mbox{ if }x>\bar q.
\end{cases}
$$

Assume that $\mu_0=\frac{1}{2}$, $h_3=\{1,1\} $ and $\bar q =\frac{2}{3}$. We can calculate agent thresholds in the following way:
\begin{eqnarray*}
&\frac{x_1}{1-x_1}=1\Rightarrow x_1=\frac{1}{2}\Rightarrow \mu^+=\frac{7}{12}\Rightarrow\\ 
& \Rightarrow x_2=\frac{5}{12}\Rightarrow \frac{\mu^{++}}{1-\mu^{++}}=\frac{1-G_1(\frac{1}{2})}{1-G_0(\frac{1}{2})}\frac{1-G_1(\frac{5}{12})}{1-G_0(\frac{5}{12})}\Rightarrow \mu^{++}=\frac{91}{146}\approx 0.623  \\
& \Rightarrow x_3=\frac{55}{146}>\frac{1}{3}
\end{eqnarray*}

As shown in \cite{Banerjee1992,bikhchandani1992theory}, in the classic binary signal model,  the public belief following every history is determined by the initial public belief and the difference between the number of $a=1$ and $a=0$ taken. Whenever this difference  is greater than two, agent actions are no longer informative, thus a cascade occurs and $\mu_{t+1}=\mu_t.$  Note that here, unlike in the case of the classical model with binary signals, after $h_3=\{1,1\},$ agent $3$ still plays $a=0$ with positive probability. This attribute makes possible more direct methods of empirical analysis (as we show in Section \ref{sec:econometric}), but also allows us to gain further insight into the interim periods of the observational learning process, as we show in the following section.

\section{The Economic Significance of Unbounded Signals.}\label{sec:SnS}

In their seminal work, Smith and S\o rensen \cite{Smith2012} generalized the game information structure from one in which the signals are binary to a structure in which abstract signals are drawn from one of two  state-dependent distributions. This extension provided important insights into the governing forces of herding. Their first result stated that when the initial belief is not in the cascade region and signals are not discrete, information cascades do not occur as the public belief never crosses into the cascade region, but converges to its border. Their second result stated that despite the scarcity of information cascades, the history of actions will ``settle'' on an alternative. They identified a necessary and sufficient condition under which the public belief converges to the true state of the world.

Smith and S\o rensen classified the game information structure into bounded or unbounded beliefs. When signals are unbounded, at any public belief, and after any history, there is always a positive chance that an agent will receive a signal strong enough to induce a contrary action.  In our model, this translates to $\bar{q} =1.$ In this section we revisit their classic results using the  example from Section \ref{sec:UD}.

In the table below we calculated the probability of a contrary action following a history with an initial belief of $\frac{1}{2}$ and a sequence of  1,2,4, and 8 consecutive actions for several values of maximal signal quality $\bar{q}$. One can see that when signals are unbounded, the probability of a contrary action remains significant even after 8 consecutive actions. In addition, note that when signals are bounded, the public belief stabilizes rapidly to the border of the cascade region, yet it never crosses it. In addition, note that the effect signal boundedness has on the  process of social learning can best be witnessed when the sequence of consecutive actions is sufficiently long. For example, when the history is $\{0,0\}$, the probability of $a=1$ is roughly the same for all levels of $\bar{q}$. However, when $h=\{0,0,0,0,0,0,0,0\},$ the probability of $a=1$, is greater by an order of magnitude, from $0.0021$ when $\bar q=0.55$ to $0.0556$ when $\bar q=1.$


\begin{table}[!htbp]
	\centering\resizebox{\columnwidth}{!}{%
	\begin{tabular}{@{}|c|cc|cc|cc|cc|@{}}
		\toprule
		\multirow{2}{*}{\begin{tabular}[c]{@{}c@{}}$\mu_0=\frac{1}{2}$ \\  $h$\end{tabular}}& \multicolumn{2}{c|}{$\bar q=0.55$}  & \multicolumn{2}{c|}{$\bar q =0.66$}   & \multicolumn{2}{c|}{$\bar q = 0.77$}  & \multicolumn{2}{c|}{$\bar q = 1$}     \\ \cmidrule(l){2-9} 
		 & $Pr(a=1|h)$ & $\mu_{t+1}(h)$ & $Pr(a=1|h)$ & $\mu_{t+1}(h)$ & $Pr(a=1|h)$ & $\mu_{t+1}(h)$ & $Pr(a=1|h)$ & $\mu_{t+1}(h)$ \\ \midrule
		\{0\}                                                                             & 0.25        & 0.475          & 0.25        & 0.42           & 0.25        & 0.365          & 0.25        & 0.25           \\
		\{0,0\}                                                                            & 0.125       & 0.4625         & 0.125        & 0.382          & 0.138       & 0.304          & 0.167       & 0.167          \\
		\{0,0,0,0\}                                                                          & 0.032       & 0.454          & 0.0374      & 0.352          & 0.05        & 0.257          & 0.1         & 0.1            \\
		\{0,0,0,0,0,0,0,0\}                                                                      & 0.0021      & 0.450          & 0.0034      & 0.341          & 0.008       & 0.234          & 0.0556      & 0.0556         \\ \bottomrule
	\end{tabular}
}
\caption{The probability of a contrary action and convergence of public belief when the maximal signal quality varies, in the family of  information structures presented in Section \ref{sec:UD}.}
\end{table}
\FloatBarrier
\subsection{Occurrence of cascades with continuous signals.}\label{sec: condition}

When generalizing the binary model to one with abstract signals, Smith and S\o rensen distinguished between two types of cascades: (1) a public belief cascade, i.e., a case in which the public belief converges to an interior point, which occurs whenever signals are bounded, and (2) an  action cascade, which describes a case where the agent actions converge to a single one, occuring whenever the public belief crosses into the cascade region. For the latter Smith and S\o rensen \cite{Smith2012} stated that it occurs with positive probability when the distribution tails contains atoms. In \cite{herrera2012necesssary},   Herrera and H\"{o}rner study the existence of action cascades in models with continuous sample spaces and prove that, under some technical conditions,\footnote{In \cite{herrera2012necesssary}, they require that the information structure satisfies MLRP, that the signal space is compact, and that the density for every $x$ is bounded away from zero.}  when signals are continuous, an action cascade occurs if and only if the information structure does not exhibit the \textit{increasing hazard ratio property (IHRP).} \footnote{By \cite{herrera2012necesssary}, (strict) IHRP holds if the following mapping is increasing with signal $x,$ 
$H(x)=\frac{1-G_0(x)}{1-G_1(x)}\frac{g_1(x)}{g_0(x)}.$}
 In this section we  show that the condition provided in \cite{herrera2012necesssary} does not hold in the general case. We do so by constructing a counter-example. In addition, in Theorem \ref{thm:cascade_cond} we provide an alternative necessary and sufficient condition for the occurrence of action cascades in the general case.

To study the occurrence of action cascades we study a generalized version of the example presented in Section \ref{sec:UD}. In this family of information structures we assume that the agent signal qualities are distributed uniformly between $[\ubar{q},\bar q]$ for some $1\ge\bar q \ge\ubar q\ge\frac{1}{2}.$ When $\ubar q=\frac{1}{2}$ we get the example from Section \ref{sec:UD}, and when $\bar q=\ubar q$ we get the binary signal model of \cite{Banerjee1992,bikhchandani1992theory}.  Using Python, we  calculated the number of consecutive $a=1$ actions required for action cascades, starting at $\mu_0=\ubar{q}$ when $\bar q =0.8$. The results, depending on the of value of $\ubar q$, are plotted in Figure \ref{fig:action cascades}.
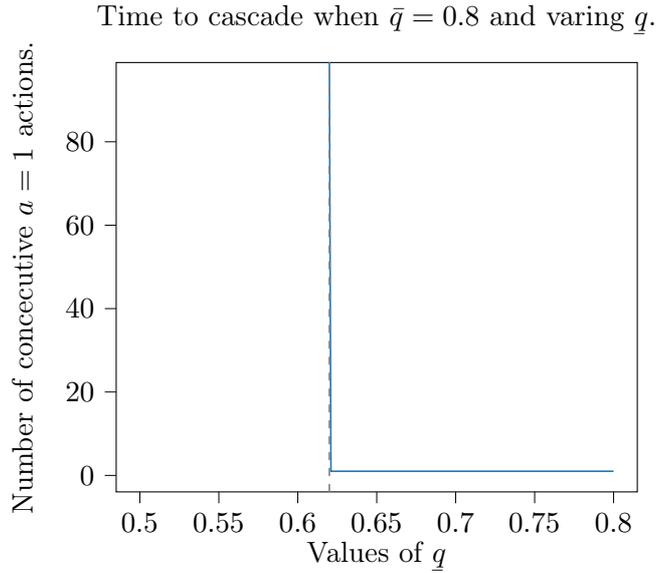
\begin{figure}[!htbp]
	\centering
\begin{tikzpicture}

\definecolor{color0}{rgb}{0.12156862745098,0.466666666666667,0.705882352941177}

\begin{axis}[
tick align=outside,
tick pos=left,
title={Time to cascade when \(\displaystyle \bar q =0.8\) and varing \(\displaystyle \underaccent{\bar}{ q} .\)},
x grid style={white!69.01960784313725!black},
xlabel={Values of \(\displaystyle \underaccent{\bar}{ q}\)},
xmin=0.485, xmax=0.815,
xtick style={color=black},
y grid style={white!69.01960784313725!black},
ylabel={Number of concecutive \(\displaystyle a=1\) actions.},
ymin=-3.95, ymax=99,
ytick style={color=black}
]
\addplot [semithick, color0]
table {%
0.5 100
0.501 100
0.502 100
0.503 100
0.504 100
0.505 100
0.506 100
0.507 100
0.508 100
0.509 100
0.51 100
0.511 100
0.512 100
0.513 100
0.514 100
0.515 100
0.516 100
0.517 100
0.518 100
0.519 100
0.52 100
0.521 100
0.522 100
0.523 100
0.524 100
0.525 100
0.526 100
0.527 100
0.528 100
0.529 100
0.53 100
0.531 100
0.532 100
0.533 100
0.534 100
0.535 100
0.536 100
0.537 100
0.538 100
0.539 100
0.54 100
0.541 100
0.542 100
0.543 100
0.544 100
0.545 100
0.546 100
0.547 100
0.548 100
0.549 100
0.55 100
0.551 100
0.552 100
0.553 100
0.554 100
0.555 100
0.556 100
0.557 100
0.558 100
0.559 100
0.56 100
0.561 100
0.562 100
0.563 100
0.564 100
0.565 100
0.566 100
0.567 100
0.568 100
0.569 100
0.57 100
0.571 100
0.572 100
0.573 100
0.574 100
0.575 100
0.576 100
0.577 100
0.578 100
0.579 100
0.58 100
0.581 100
0.582 100
0.583 100
0.584 100
0.585 100
0.586 100
0.587 100
0.588 100
0.589 100
0.59 100
0.591 100
0.592 100
0.593 100
0.594 100
0.595 100
0.596 100
0.597 100
0.598 100
0.599 100
0.6 100
0.601 100
0.602 100
0.603 100
0.604 100
0.605 100
0.606 100
0.607 100
0.608 100
0.609 100
0.61 100
0.611 100
0.612 100
0.613 100
0.614 100
0.615 100
0.616 100
0.617 100
0.618 100
0.619 100
0.62 100
0.621 1
0.622 1
0.623 1
0.624 1
0.625 1
0.626 1
0.627 1
0.628 1
0.629 1
0.63 1
0.631 1
0.632 1
0.633 1
0.634 1
0.635 1
0.636 1
0.637 1
0.638 1
0.639 1
0.64 1
0.641 1
0.642 1
0.643 1
0.644 1
0.645 1
0.646 1
0.647 1
0.648 1
0.649 1
0.65 1
0.651 1
0.652 1
0.653 1
0.654 1
0.655 1
0.656 1
0.657 1
0.658 1
0.659 1
0.66 1
0.661 1
0.662 1
0.663 1
0.664 1
0.665 1
0.666 1
0.667 1
0.668 1
0.669 1
0.67 1
0.671 1
0.672 1
0.673 1
0.674 1
0.675 1
0.676 1
0.677 1
0.678 1
0.679 1
0.68 1
0.681 1
0.682 1
0.683 1
0.684 1
0.685 1
0.686 1
0.687 1
0.688 1
0.689 1
0.69 1
0.691 1
0.692 1
0.693 1
0.694 1
0.695 1
0.696 1
0.697 1
0.698 1
0.699 1
0.7 1
0.701 1
0.702 1
0.703 1
0.704 1
0.705 1
0.706 1
0.707 1
0.708 1
0.709 1
0.71 1
0.711 1
0.712 1
0.713 1
0.714 1
0.715 1
0.716 1
0.717 1
0.718 1
0.719 1
0.72 1
0.721 1
0.722 1
0.723 1
0.724 1
0.725 1
0.726 1
0.727 1
0.728 1
0.729 1
0.73 1
0.731 1
0.732 1
0.733 1
0.734 1
0.735 1
0.736 1
0.737 1
0.738 1
0.739 1
0.74 1
0.741 1
0.742 1
0.743 1
0.744 1
0.745 1
0.746 1
0.747 1
0.748 1
0.749 1
0.75 1
0.751 1
0.752 1
0.753 1
0.754 1
0.755 1
0.756 1
0.757 1
0.758 1
0.759 1
0.76 1
0.761 1
0.762 1
0.763 1
0.764 1
0.765 1
0.766 1
0.767 1
0.768 1
0.769 1
0.77 1
0.771 1
0.772 1
0.773 1
0.774 1
0.775 1
0.776 1
0.777 1
0.778 1
0.779 1
0.78 1
0.781 1
0.782 1
0.783 1
0.784 1
0.785 1
0.786 1
0.787 1
0.788 1
0.789 1
0.79 1
0.791 1
0.792 1
0.793 1
0.794 1
0.795 1
0.796 1
0.797 1
0.798 1
0.799 1
0.8 1
};
\addplot [semithick, white!50.19607843137255!black, dashed]
table {%
0.62 -3.95
0.62 104.95
};
\end{axis}

\end{tikzpicture}
	\caption{The occurrence of action cascades when signal qualities are drawn from a uniform distribution over $[\underline{q},\bar{q}]$ (for  $\bar{q}=0.8$ and varied $\underline{q}$).}\label{fig:action cascades}
	\end{figure}
\FloatBarrier
By Figure \ref{fig:action cascades}, one can see that action cascades do not occur for $\ubar q<0.620$ and occur after the history $h_t=\{1\}, \mu_0=\ubar{q},$  whenever $\ubar q\ge 0.620.$  From this one can rule out atoms in the distribution tails as the cause for cascades as our signals are continuous yet action cascades do occur whenever  $\ubar q>0.620.$ In addition, the IHRP condition of Herrera and H\"{o}rner also seems inaccurate as our results demonstrate a counter-example. E.g, when $\ubar q=0.56,$ the information structure violates the IHRP condition (see the orange curve in Figure \ref{fig:IHRP}), yet action cascades do not occur.\footnote{The attentive reader may think that this contradiction is due to the fact that our example violates Herrera and H\"{o}rner's assumption of a compact domain. However, in Appendix \ref{sec:ihrp_compact_ex}, we provide an example with a compact domain, which violates IHRP, yet no action cascade occurs.} In the following section we provide a  necessary and sufficient condition for  action cascades.
\begin{figure}
	\centering
\begin{tikzpicture}

\definecolor{color0}{rgb}{0.12156862745098,0.466666666666667,0.705882352941177}
\definecolor{color1}{rgb}{1,0.498039215686275,0.0549019607843137}
\definecolor{color2}{rgb}{0.172549019607843,0.627450980392157,0.172549019607843}
\definecolor{color3}{rgb}{0.83921568627451,0.152941176470588,0.156862745098039}

\begin{axis}[
legend cell align={left},
legend style={fill opacity=0.8, draw opacity=1, text opacity=1, at={(0.03,0.97)}, anchor=north west, draw=white!80.0!black},
tick align=outside,
tick pos=both,
x grid style={white!69.01960784313725!black},
xlabel={\(\displaystyle $x$\)},
xmin=0.181, xmax=0.819000000000001,
xtick style={color=black},
y grid style={white!69.01960784313725!black},
ylabel={\(\displaystyle \frac{1-G_0(x)}{1-G_1(x)}\frac{g_1(x)}{g_0(x)}\)},
ymin=0.146708698623928, ymax=1.00925768558158,
ytick style={color=black}
]
\addplot [semithick, color0,unbounded coords=jump]
table {%
0.21 0.260558967289134
0.22 0.270990447461036
0.23 0.281301223048796
0.24 0.291497975708502
0.25 0.301587301587301
0.26 0.311575726670066
0.27 0.32146972218666
0.28 0.331275720164609
0.29 0.341000129215661
0.3 0.350649350649351
0.31 0.360229795012404
0.32 0.369747899159664
0.33 0.379210143970413
0.34 0.388623072833599
0.35 0.397993311036789
0.36 0.407327586206897
0.37 0.416632749966083
0.38 0.425915800984144
0.39 0.435183909629426
0.4 0.444444444444444
0.41 0.453705000700378
0.42 0.462973431317128
0.43 0.472257880473542
0.44 0.481566820276498
0.45 0.490909090909091
0.46 0.500293944738389
0.47 0.509731094933888
0.48 0.519230769230769
0.49 0.528803769569844
0.5 0.538461538461539
0.51 0.548216233058109
0.52 0.558080808080808
0.53 0.568069108942569
0.54 0.578195976638546
0.55 0.588477366255144
0.56 0.598930481283423
0.57 0.609573926328297
0.58 0.620427881297447
0.59 0.631514300754518
0.6 0.642857142857143
0.61 0.654482633206038
0.62 0.666419570051891
0.63 0.678699678699679
0.64 0.691358024691358
0.65 0.704433497536946
0.66 0.717969379532636
0.67 0.732014017728303
0.68 0.746621621621622
0.69 0.761853214981598
0.7 0.777777777777778
0.71 0.794473624115095
0.72 0.81203007518797
0.73 0.830549503752118
0.74 0.85014985014985
0.75 0.870967741935484
0.76 0.893162393162393
0.77 0.916920520631403
0.78 0.942462600690448
0.79 0.970050913447137
};
\addlegendentry{$\underbar{q}=0.5$}
\addplot [semithick, color1,unbounded coords=jump]
table {%
0.21 0.259231675625929
0.22 0.268175985733238
0.23 0.276831068738583
0.24 0.285193953742488
0.25 0.293260473588342
0.26 0.301025163094129
0.27 0.308481140255467
0.28 0.315619967793881
0.29 0.322431491910859
0.3 0.32890365448505
0.31 0.335022274189633
0.32 0.340770791075051
0.33 0.346129968014464
0.34 0.35107754098852
0.35 0.355587808417997
0.36 0.359631147540983
0.37 0.363173443048149
0.38 0.366175409667113
0.39 0.368591785900481
0.4 0.37037037037037
0.41 0.371450864796913
0.42 0.371763477981614
0.43 0.371227232509803
0.44 nan
0.45 nan
0.46 nan
0.47 nan
0.48 nan
0.49 nan
0.5 nan
0.51 nan
0.52 nan
0.53 nan
0.54 nan
0.55 nan
0.56 0.598930481283423
0.57 0.609573926328297
0.58 0.620427881297447
0.59 0.631514300754519
0.6 0.642857142857143
0.61 0.654482633206038
0.62 0.666419570051891
0.63 0.678699678699679
0.64 0.691358024691358
0.65 0.704433497536946
0.66 0.717969379532635
0.67 0.732014017728304
0.68 0.746621621621622
0.69 0.761853214981598
0.7 0.777777777777778
0.71 0.794473624115094
0.72 0.81203007518797
0.73 0.830549503752119
0.74 0.85014985014985
0.75 0.870967741935484
0.76 0.893162393162394
0.77 0.916920520631407
0.78 0.942462600690449
0.79 0.970050913447145
};
\addlegendentry{$\underbar{q}=0.56$}
\addplot [semithick, color2,unbounded coords=jump]
table {%
0.21 0.252526011343575
0.22 0.253797440325938
0.23 0.2537185403764
0.24 0.252177205603938
0.25 0.24904214559387
0.26 0.244159413650939
0.27 0.237348114536687
0.28 0.228395061728395
0.29 0.21704807713047
0.3 0.203007518796992
0.31 0.185915470758367
0.32 nan
0.33 nan
0.34 nan
0.35 nan
0.36 nan
0.37 nan
0.38 nan
0.39 nan
0.4 nan
0.41 nan
0.42 nan
0.43 nan
0.44 nan
0.45 nan
0.46 nan
0.47 nan
0.48 nan
0.49 nan
0.5 nan
0.51 nan
0.52 nan
0.53 nan
0.54 nan
0.55 nan
0.56 nan
0.57 nan
0.58 nan
0.59 nan
0.6 nan
0.61 nan
0.62 nan
0.63 nan
0.64 nan
0.65 nan
0.66 nan
0.67 nan
0.68 0.746621621621622
0.69 0.761853214981598
0.7 0.777777777777778
0.71 0.794473624115095
0.72 0.81203007518797
0.73 0.830549503752118
0.74 0.85014985014985
0.75 0.870967741935484
0.76 0.893162393162392
0.77 0.916920520631405
0.78 0.942462600690449
0.79 0.970050913447137
};
\addlegendentry{$\underbar{q}=0.6$}
\addlegendentry{$\underbar{q}=0.68$}
\addplot [semithick, color3, ,unbounded coords=jump]
table {%
	0.21 0.257899802085298
	0.22 0.265341400172861
	0.23 0.272311258878037
	0.24 0.278793217353006
	0.25 0.28476821192053
	0.26 0.290213951868087
	0.27 0.295104540213372
	0.28 0.299410029498525
	0.29 0.303095900494283
	0.3 0.306122448979592
	0.31 0.308444062353819
	0.32 0.310008363534987
	0.33 0.310755194129023
	0.34 0.310615401860778
	0.35 0.309509388249546
	0.36 0.307345360824742
	0.37 0.304017218896796
	0.38 0.299401981753896
	0.39 0.293356641365684
	0.4 nan
	0.41 nan
	0.42 nan
	0.43 nan
	0.44 nan
	0.45 nan
	0.46 nan
	0.47 nan
	0.48 nan
	0.49 nan
	0.5 nan
	0.51 nan
	0.52 nan
	0.53 nan
	0.54 nan
	0.55 nan
	0.56 nan
	0.57 nan
	0.58 nan
	0.59 nan
	0.6 0.642857142857143
	0.61 0.654482633206038
	0.62 0.66641957005189
	0.63 0.678699678699679
	0.64 0.691358024691358
	0.65 0.704433497536946
	0.66 0.717969379532636
	0.67 0.732014017728304
	0.68 0.746621621621622
	0.69 0.761853214981598
	0.7 0.777777777777778
	0.71 0.794473624115094
	0.72 0.812030075187971
	0.73 0.830549503752119
	0.74 0.850149850149851
	0.75 0.870967741935484
	0.76 0.893162393162393
	0.77 0.916920520631407
	0.78 0.942462600690445
	0.79 0.970050913447138
};
\end{axis}

\end{tikzpicture}
	\caption{Hazard ratio for  $\bar{q}=0.8$ and various $\underbar{q}$.}\label{fig:IHRP}
\end{figure}
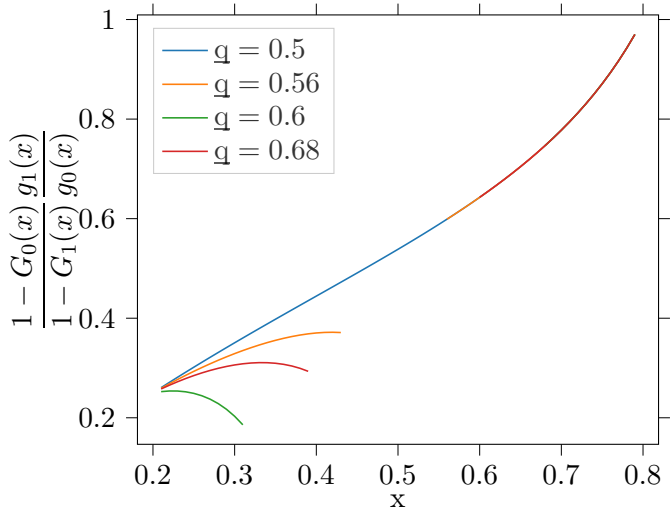

\FloatBarrier
\subsection{A necessary and sufficient  condition for  action cascades.}

In the following theorem we provide the condition for action cascades in the general MLRP case (not only in the family of information structures described in our model). We therefore slightly alter our notation.  Let $X$ denote an abstract set of signals. At state $\omega\in \{0,1\},$ agent signals are independently drawn from a state-dependent distribution $F_\omega.$ We assume that no signal perfectly reveals the state, i.e. that $F_0,F_1$ are mutually absolutely continuous with respects to each other. We denote the Radon-Nikodym derivative of $F_\omega$ with respect to the probability measure $\frac{F_0+F_1}{2}$ by $f_\omega.$\footnote{ Note that when $X$ is finite $f_\omega$ is the probability mass function of $F_\omega$ and when $X$ is continuous, $f_\omega$ is simply its density} 

 We denote the signal structure bounds as follows, $$\ubar{x}=\arg\min \frac{f_1(x)}{f_0(x)+f_1(x)},\bar{x}=\arg\max \frac{f_1(x)}{f_0(x)+f_1(x)}.$$ We assume that $\ubar{x},\bar{x}\in X$ and that the information structure exhibits the monotone likelihood property, i.e. $\frac{f_1(x)}{f_0(x)}$ increases with $x.$ 
 For example, in the information structures generated by our example from Section \ref{sec:SnS}, $X=[1-\bar q,1-\ubar{q}]\cup[\ubar{q},\bar q]$, $\bar{x} = \bar q$ and $\ubar{x} = 1-\bar q.$ Hereafter, for readability we slightly abuse notation, and denote the state conditional CDF by $F_\omega(x)$ and the state conditional PDF (or PMF if discrete) by $f_\omega(x).$ 
 
In the following theorem we provide a  necessary and sufficient  condition for  action cascades.

\begin{theorem}\label{thm:cascade_cond}
	Let $X \subseteq [\ubar{x}, \bar{x}]$ be defined as above. If the distributions $F_0,F_1$ exhibit MLRP, and  the prior public belief $\mu := \Pr(\omega=1)$ is such that an action cascade is not in progress, then
	
	 a necessary condition, and a sufficient one when strict, that an action up-cascade can start is that $\exists x\in X$ for which,
	\begin{equation}\label{eq:condition_1}
\frac{f_1(\ubar{x})}{f_0(\ubar{x})}\ge\frac{f_1(x)}{f_0(x)}\frac{1-F_0(x)}{1-F_1(x)}
\end{equation}
		and a necessary condition, and a sufficient one when strict, that an action down-cascade can start is that $\exists x\in X$ for which
	\begin{equation}\label{eq:condition_12}
	\frac{f_1(\bar{x})}{f_0(\bar{x})}\le \frac{f_1(x)}{f_0(x)}\frac{F_0(x)}{F_1(x)}
	\end{equation}
\end{theorem}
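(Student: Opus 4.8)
The plan is to pass to the likelihood-ratio (odds) parametrization of the public belief and reduce the statement to a one-step update computation. Write $\lambda(x):=f_1(x)/f_0(x)$ for the signal likelihood ratio, which is increasing by MLRP, and $\ell:=\mu/(1-\mu)$ for the belief odds. As established in Section~\ref{sec:model}, at odds $\ell$ a type-$x$ agent plays $a=1$ iff $\ell\,\lambda(x)>1$, so the optimum is the threshold rule with marginal type $y(\ell)\in X$ determined by $\lambda(y(\ell))=1/\ell$ (or, when $1/\ell$ is not attained on $X$, the largest $x\in X$ with $\lambda(x)<1/\ell$). An action up-cascade is precisely the event that every type plays $a=1$, i.e. $\ell\,\lambda(\ubar{x})\ge 1$, so the up-cascade region is $\{\ell\ge L^{*}\}$ with $L^{*}:=1/\lambda(\ubar{x})$, and symmetrically the down-cascade region is $\{\ell\le L_{*}\}$ with $L_{*}:=1/\lambda(\bar{x})$. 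In odds form the Bayesian updates \eqref{eq:mu+}--\eqref{eq:mu-} read $\ell\mapsto \ell\,M^{+}(y)$ after $a=1$ and $\ell\mapsto\ell\,M^{-}(y)$ after $a=0$, where $y$ is the current threshold, $M^{+}(y)=\frac{1-F_1(y)}{1-F_0(y)}$, and $M^{-}(y)=\frac{F_1(y)}{F_0(y)}$; since MLRP implies $F_1\le F_0$ pointwise, $M^{+}\ge 1\ge M^{-}$, so $a=1$ cannot lower the odds and $a=0$ cannot raise them.

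For necessity I would argue as follows. Suppose an action up-cascade can start: there is a public belief with odds $\ell^{-}<L^{*}$ (no cascade in progress) and an action after which the odds are at least $L^{*}$. Since $a=0$ cannot raise the odds, that action is $a=1$; let $y^{-}=y(\ell^{-})\in X$ be the threshold. Then $\ell^{-}M^{+}(y^{-})\ge L^{*}$. Using $\ell^{-}\le 1/\lambda(y^{-})$ — with equality unless $1/\ell^{-}$ lands in a gap of $\lambda(X)$, a case that only makes the conclusion strict — I obtain $M^{+}(y^{-})/\lambda(y^{-})\ge \ell^{-}M^{+}(y^{-})\ge 1/\lambda(\ubar{x})$, i.e. $\frac{f_1(\ubar{x})}{f_0(\ubar{x})}\ge \frac{f_1(y^{-})}{f_0(y^{-})}\frac{1-F_0(y^{-})}{1-F_1(y^{-})}$, which is \eqref{eq:condition_1} with $x=y^{-}$. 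The down-cascade case is the mirror image, using $a=0$, $M^{-}$, $L_{*}$ and $\bar{x}$, and yields \eqref{eq:condition_12}.

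For sufficiency (the strict case) I would run this backwards. Assume $\lambda(\ubar{x})M^{+}(x^{*})>\lambda(x^{*})$ for some $x^{*}\in X$. One first rules out $x^{*}=\bar{x}$ (there $a=1$ has zero probability, or an atom would force $M^{+}(\bar{x})\le 1$ and contradict $\lambda(\bar{x})\ge\lambda(\ubar{x})$) and notes $\lambda(x^{*})\ge\lambda(\ubar{x})$, with equality only if $\ubar{x}$ carries an atom. In the generic case $\lambda(x^{*})>\lambda(\ubar{x})$: take the public belief with odds $\ell^{*}=1/\lambda(x^{*})$, so that $L_{*}<\ell^{*}<L^{*}$ (no cascade in progress) and $x^{*}$ is the threshold. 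The action $a=1$ has positive probability, since $1-F_\omega(x^{*})>0$ when $x^{*}<\bar{x}$, and it moves the odds to $\ell^{*}M^{+}(x^{*})=M^{+}(x^{*})/\lambda(x^{*})>1/\lambda(\ubar{x})=L^{*}$: the belief enters the up-cascade region, and every subsequent type plays $a=1$, so an action up-cascade has started. In the degenerate case where the only witness is $x^{*}=\ubar{x}$ with an atom (so $M^{+}(\ubar{x})>1$) I would instead start from odds just below $L^{*}$ — where the threshold is still $\ubar{x}$ — and run the same computation. The down-cascade direction is symmetric, using $a=0$ and \eqref{eq:condition_12}.

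The step I expect to be the real obstacle is pinning down what ``can start'' should mean. Reading it as ``from some public belief not already in a cascade'' makes the one-step argument above go through cleanly, and matches the phenomena the paper discusses (an atom at $\ubar{x}$ makes \eqref{eq:condition_1} strict at $x=\ubar{x}$, recovering the Smith--S\o rensen picture, while in the gapped uniform example the witness is an interior threshold). If one instead insists on ``from the particular prior $\mu$'', the claim is considerably more delicate: because $X$ may have gaps and $M^{+}(\cdot)$ need not be monotone in $y$, the orbit of a fixed prior under the $a=1$ update can converge to $L^{*}$ without ever reaching it even when \eqref{eq:condition_1} holds strictly, so one must select the correct extremal starting belief rather than an arbitrary one. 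The remainder is bookkeeping I would dispatch separately: the ``gap'' case where the threshold is a support endpoint and $\lambda(y(\ell))<1/\ell$; atoms at $\ubar{x}$ or $\bar{x}$ and the updates attached to indifferent types; and checking the constructed belief lies strictly inside the non-cascade region. Each of these uses only MLRP, the implied dominance $F_1\le F_0$, and the single-crossing of $f_0-f_1$.
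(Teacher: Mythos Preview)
Your proposal is correct and follows essentially the same route as the paper's own proof: pass to the odds/likelihood-ratio scale, observe that if $a=1$ triggers an up-cascade then the threshold type $\tilde{x}(\mu)$ at the triggering belief witnesses \eqref{eq:condition_1}, and for sufficiency choose the public belief $\mu$ so that the given $x$ is exactly the threshold (i.e.\ $\ell=1/\lambda(x)$, equivalently $\mu=f_0(x)/(f_0(x)+f_1(x))$) and compute the one-step update. You are in fact more careful than the paper about the boundary bookkeeping (gaps in $\lambda(X)$, atoms at $\ubar{x}$ or $\bar{x}$, and the intended reading of ``can start''), but the core argument is identical.
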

\begin{proof}	
The agent's expected utility for each action is
\begin{eqnarray}
&u(a=1|x,\mu)=\frac{f_1(x)\mu}{f_1(x)\mu+f_0(x)(1-\mu)}-\frac{f_0(x)\mu}{f_0(x)\mu+f_1(x)(1-\mu)}\\
&u(a=0|x,\mu)=-\frac{f_1(x)\mu}{f_1(x)\mu+f_0(x)(1-\mu)}+\frac{f_0(x)\mu}{f_0(x)\mu+f_1(x)(1-\mu)}
\end{eqnarray}

Thus, the agent chooses $a=1$ whenever $\frac{f_1(x)}{f_0(x)}>\frac{1-\mu}{\mu}$, $a=0$ whenever $\frac{f_1(x)}{f_0(x)}<\frac{1-\mu}{\mu}$, and is indifferent otherwise.

Therefore, whenever $\frac{f_1(\ubar{x})}{f_0(\ubar{x})}>\frac{1-\mu}{\mu}$, an up-cascade is in progress, and conversely, if an up-cascade is in progress, $\frac{f_1(\ubar{x})}{f_0(\ubar{x})}\geq\frac{1-\mu}{\mu}$. Whenever $\frac{f_1(\bar x)}{f_0(\bar x)}<\frac{1-\mu}{\mu}$, a down-cascade is in progress, and conversely, if a down-cascade is in progress, $\frac{f_1(\ubar{x})}{f_0(\ubar{x})}\leq\frac{1-\mu}{\mu}$.

We denote the \textit{threshold signal} by $\tilde{x}(\mu)=\sup\{x\in X| \frac{f_1(x)}{f_0(x)}\le\frac{1-\mu}{\mu}\}.$ Following action $a,$ other agents will  update their belief $\mu$ in the following manner.
\begin{eqnarray}
\frac{1-\mu^+}{\mu^+}&=\frac{1-\mu}{\mu}\frac{1-F_0(\tilde{x}(\mu))}{1-F_1(\tilde{x}(\mu))}\\
\frac{1-\mu^-}{\mu^-}&=\frac{1-\mu}{\mu}\frac{F_0(\tilde{x}(\mu))}{F_1(\tilde{x}(\mu))}
\end{eqnarray}
 where $\mu^+,\mu^-$ denote the updated public belief following actions $a=1,a=0$ respectively.

So, if action $a=1$ starts an up-cascade, we must have
$$
\frac{f_1(\ubar{x})}{f_0(\ubar{x})}\geq\frac{1-\mu^+}{\mu^+}=\frac{1-\mu}{\mu}\frac{1-F_0(\tilde{x}(\mu))}{1-F_1(\tilde{x}(\mu))} \geq \frac{f_1(\tilde{x}(\mu))}{f_0(\tilde{x}(\mu))}\frac{1-F_0(\tilde{x}(\mu))}{1-F_1(\tilde{x}(\mu))}.
$$ which is \eqref{eq:condition_1} with $\tilde{x}(\mu)$ substituted for $x$.
For sufficiency note that when  this inequality is strict, and the public belief is $\frac{f_0(\tilde{x}(\mu))}{f_0(\tilde{x}(\mu)) + f_1(\tilde{x}(\mu))}$, a cascade starts after the next action, if it is $a=1.$ 
 The second part of the theorem is due to symmetric considerations.
\end{proof} 
In our example, equation \eqref{eq:condition_1} holds whenever there exist $x\in[1-\bar{q},1-\ubar{q}]\cup[\ubar{q},\bar{q}]$ for which the following holds,
$$
\frac{1-x}{x}\frac{2(\bar{q}-\ubar{q})-x^2+(1-\bar{q})^2}{2(\bar{q}-\ubar{q})-\bar{q}^2+(1-x)^2}\ge \frac{1-\bar{q}}{\bar{q}}.
$$
 When $\bar q=0.8$, by Theorem \ref{thm:cascade_cond}, action cascades occur whenever $\ubar{q}\ge 0.620$.

Furthermore, when the signal distribution is discrete at $\ubar{x},$ an $a=1$ cascade may occur, and when it is discrete around $\bar{x}$ an $a=0$ action cascade may occur, as there is positive probability for such a posterior. To see this note that when signals are binary with quality $q,$ equation \eqref{eq:condition_1}, for $s_t=0$ can be written as,
$$
\frac{1-q}{q}\frac{1-q}{q}\le\frac{1-q}{q}.
$$
This inequality holds for every $q>0.5.$

 In addition, as shown by Herrera and H\o rner \cite{herrera2012necesssary} for continuous distributions over a compact domain, and under some technical conditions (specifically, the density for every $x\in X$ is above some positive constant), the existence of a solution to equation \eqref{eq:condition_1} is determined by whether or not the information structure exhibits IHRP (see Proposition 2 in   \cite{herrera2012necesssary}).

\section{Econometric Estimation}\label{sec:econometric}

In order to demonstrate the applicability of our method for econometric estimation we revisit the experimental work of Anderson and Holt \cite{Anderson1997}. In \cite{Anderson1997} the authors conducted an experiment, designed to generate information cascades in a controlled environment. In the experiment we revisit,  each subject received a private draw from an urn which contained either two `{\em a}' labeled balls and one `{\em b}' labeled ball or two `{\em b}' labeled balls and one `{\em a}' labeled ball. Following this process, subjects were sequentially asked to declare out loud whether they think the urn is a majority-`{\em a}' urn or a  majority-`{\em b}' urn. This process yielded 90 sequences of declarations.\footnote{The experimental data for Anderson and Holt's symmetric experiment is available at \url{http://www.people.virginia.edu/~cah2k/casdata.pdf}.}  The authors then conducted an econometric analysis to prove the existence of information cascades which was constructed on the binary signal model with a known probability of agents' mistakes.

Our model assumes that agents vary in their ability to trust the informational value of their private signal. Under this assumption we attempt to ``reverse engineer'' the signal quality in the aforementioned experiment. We base our analysis on the signal structure presented in Section \ref{sec:UD} and use Hansen's General Method of Moments (GMM) estimation procedure \cite{Hansen1982}, which have been shown to be consistent, efficient, and asymptotically normal estimators,  to estimate the subject's signal quality.\footnote{We maintain the required assumption of a weakly stationary ergodic stochastic process by selecting our moment conditions which are described bellow.}

In GMM estimation, the goal is to find the model parameters for which the distance between a vector of model moments and those of the data is minimized. In social learning models, a natural choice for the model moments is $\Pr(a_t=1|h_t)$ for some subset of histories $h_t\in A\subseteq H_t.$  i.e., for every $h\in H,$ we define the moment conditions as,
$$
\Pr(a_t=1|h,q)-\sum a_t \mathbbm{1}_{h_t=h}.
$$
where $\mathbbm{1}_{h_t=h}$ is an indicator function which return 1 if the history up to action $a_t$ equal to $h$ and zero otherwise and $q$ is the parameter which defines the distribution as introduced in Section \ref{sec:UD}.

We denote the proportion of $a=1$ actions taken following the history $h_t$ in the data by $\phi_{h_t}$ and denote the corresponding vector of conditional proportions following each of the histories in our chosen subset by \textbf{$\phi$}.  For every $q\in[0.5,1]$ we can calculate recursively the conditional probability $\phi(q)_{h_t}=\Pr(a_t=1|h_t,q).$ We denote the vector of conditional model probabilities by \textbf{$\hat\phi(q).$}  The GMM estimator $\hat{q},$ is the value for which the distance between the two  vectors is minimized, i.e,
$$
\hat{q}=argmin_{q\in[0.5,1]} ||\hat\phi(q)-\phi||.
$$

Due to technical reasons we chose to focus on all histories of length two\footnote{We chose to exclude longer histories as in several sessions, Anderson and Holt \cite{Anderson1997}  introduced a public signal after the third round. Additionally, by choosing only length two histories, we verify that the assumption of a weakly stationary ergodic stochastic process holds yielding a consistent  GMM estimator.}, i.e. $A=[00,01,10,11]$. In this approach we have 4 conditions and attempt to estimate a single parameter, thus the model is identified and can be estimated. As there are more conditions than parameters, we are in an  over-identified estimation case, and thus we use a two-step estimation to find the optimal weighting matrix.\footnote{The method of two-step efficient GMM estimation is described on Chapter 3 of \cite{Hayashi2000}. The Python code used for estimation, adapted from \cite{Evans2018}, is available available at \url{https://github.com/morankor/practical_inf_cascades/}.}

 The two-steps efficient GMM estimation resulted in $\hat q = 0.7171$ with a standard deviation of $0.0208.$ Slightly above (7.5\%) the actual signal quality of $q=\frac{2}{3}.$ Furthermore, note that  $q=\frac{2}{3}$ is inside the 99\% confidence interval  $CI_{99\%}=[0.6622,0.7718]$.

\section{Discussion}\label{sec:discussion}

The literature of social learning can be roughly divided into two groups, one in which the information structure is limited to a binary signal, and another in which the signal structure is assumed to be abstract. The lack of a method to generate richer, yet tractable signal structures poses a challenge for incorporating these models in applied theory or econometric work as the structure in the former group is too limited for the majority of complex environments or estimation methods, while the structure in the latter model group, while extremely useful for asymptotic analysis,  imposes difficulties when attempting to analyze finite horizon results or attempting to convey economic significance. 

In this paper  we attempt to bridge this gap by presenting a method of generating signal structures which are richer than the binary model of \cite{Banerjee1992,bikhchandani1992theory}, yet is more tractable than the abstract model of \cite{Smith2012}.  We demonstrate the advantage of our approach by revisiting two classical papers \cite{Smith2012,Anderson1997}. 

Our goal in this paper is to provide new tools for applied theory and empirical research on social learning. Theoreticians can utilize our model to easily generate examples, perform numeric calculations, and complement their asymptotic results. Econometricians and experimenters can use our method to construct estimation methods which directly assess the effect of information cascades. In addition, we make a theoretical contribution by providing a necessary and sufficient condition for the occurrence of action cascades.

\newpage

\bibliographystyle{plain}
\bibliography{crowdfunding}
\newpage
\appendix


\section{An example of IHRP violation without action cascades.}\label{sec:ihrp_compact_ex}

Consider an alternative information structure. Assume that agent signals are drawn from $(1-c,c)$ for some $c\in[\frac{1}{2},1].$ In state $\omega=0,$ the signals are uniformly drawn, i.e for all $x\in(1-c,c),$ $g_0(x)=\frac{1}{2c-1},$ and for all $x\notin (1-c,c),$  $g_0(x)=0.$ 
In state $\omega=1$ the signals are drawn by,
$$
g_1(x)=\begin{cases}
\frac{1-\varepsilon}{2c-1}&\mbox{ if }x\in (1-c,\frac{1}{2})\\
\frac{1+\varepsilon}{2c-1}&\mbox{ if }x\in[\frac{1}{2},c)\\
0&\mbox{ otherwise}
\end{cases},
$$
for some $\varepsilon\in(0,1).$

Now, for some $\delta\in(0,1),$ assume that with probability $\delta,$  agent signals are generated by our example from Section \ref{sec: condition} (with the parameters $\bar{q}>c$ and $\ubar{q}=c$), and with probability $(1-\delta)$ signals are generated as described above.  Note that the resulting compound distributions have a compact domain, it exhibits the MLRP property (by construction), yet for sufficiently large $\varepsilon,$ it violates the IHRP property.

Using Python, we calculate the example for  $\delta=0.5,~c=0.56,~\bar{q}=0.8,$ and $\varepsilon=0.9999.$ In this example, IHRP is violated, yet action cascades never occur.

\end{document}